\DeclareMathOperator{\dist}{dist}
\DeclareMathOperator{\parent}{par}
\begin{document}

\title{Pseudo-scheduling: A New Approach to the Broadcast Scheduling Problem}

\author{Shaun N.~Joseph\inst{1} \and Lisa C.~DiPippo\inst{2}}

\institute{mZeal Communications\thanks{Research done while at the University of Rhode Island.}, Littleton MA 01460 USA\\
  \email{shaun.joseph@mzeal.com}
  \and
  University of Rhode Island, Kingston RI 02881 USA\\
  \email{dipippo@cs.uri.edu}}

\maketitle

\begin{abstract}
  The broadcast scheduling problem asks how a multihop network of
  broadcast transceivers operating on a shared medium may share the medium
  in such a way that communication over the entire network is possible.
  This can be naturally modeled as a graph coloring problem
  via distance-2 coloring ($L(1,1)$-labeling, strict scheduling).
  This coloring is difficult to compute and may require a number of colors
  quadratic in the graph degree.
  This paper introduces \emph{pseudo-scheduling}, a relaxation of
  distance-2 coloring.
  Centralized and decentralized algorithms that compute pseudo-schedules
  with colors linear in the graph degree are given and proved.
\end{abstract}

\begin{keywords}
  broadcast scheduling, TDMA, FDMA, graph coloring, wireless networks,
  ad-hoc networks
\end{keywords}

\section{Introduction}
\label{sec:intro}

The broadcast scheduling problem asks how an arbitrary multihop network of
broadcast transceivers operating on a shared medium may share the medium
in such a way that communication over the entire network is possible.
In particular, two or more transmissions made simultaneously
(in time and space) on the same medium should be expected to fail;
ie, the transmissions conflict.

A medium access control (MAC) protocol is a practical solution to the
broadcast scheduling problem.
The predominant approach to MAC protocol design is contention,
the outstanding example of which is carrier sense multiple access (CSMA).
Examples include the
wireless Ethernet standard 802.11 and the
protocol B-MAC~\cite{bmac} for wireless sensor networks.


The alternative to contention is explicit scheduling,
such as time-division multiple access (TDMA) or
frequency-division multiple access (FDMA).
Regardless of how the medium is divided, however, the allocation of quanta
to network nodes is naturally expressed as graph coloring problem; eg,
a graph coloring with ten colors might correspond to a TDMA frame with
ten timeslots.

There are a variety of graph coloring problems extant, but
the obvious and canonical model is the \emph{$L(1,1)$-labeling}, also
known as \emph{distance-2 coloring}, \emph{coloring of the graph square}, or
\emph{strict scheduling}.
Here a vertex must be colored differently from any other vertex at distance
one or two. The seminal results on this coloring were obtained by
McCormick~\cite{mccormick}, who found
that strict scheduling is NP-Complete as a decision problem;
and that the number of colors required is $\Delta^2+1$ in the
worst case, where $\Delta$ is the graph degree.

This paper introduces \emph{pseudo-scheduling},
a relaxation of strict scheduling.
Whereas strict schedules guarantee that every path in the graph is a
valid communication path, pseudo-schedules only require the existence of
some communication path between any two vertices;
the requisite paths may exist along the edges of a spanning tree,
in exact analogy to a network routing tree.


Pseudo-scheduling is defined formally as a graph coloring problem below.
In \S\ref{sec:related} we survey related work.
A centralized pseudo-scheduling algorithm using colors at most twice the
graph degree is presented in \S\ref{sec:2delta}, and in \S\ref{sec:dband}
we examine an algorithm that is decentralized but still uses colors only
linear in the graph degree (with a reasonable multiplicative factor).
We conclude in \S\ref{sec:conc}.

\subsection{Definitions}
\label{sec:intro:def}

Let us consider (vertex) coloring from the perspective of how colored vertices
do or do not ``conflict.''
Let $G = (V,E)$ be a simple, undirected graph with a coloring
$l : V \to \mathbb{Z}$.
We say that the ordered pair $(u,v) \in V^2$ is \emph{nonconflicting} iff
$uv \in E$, $l(u) \neq l(v)$, and
for all $x \neq u$ adjacent to $v$, $l(u) \neq l(x)$.
A directed path from $u$ to $v$ is likewise nonconflicting iff the pairs
that comprise it are nonconflicting.

Observe that a strict schedule can be defined as a coloring such that every
path in the graph is nonconflicting. Immediately we conceive of a natural
relaxation: instead of requiring that every path be nonconflicting, demand only
the existence of at least one nonconflicting path from $u$ to $v$ for every
$u,v$ that are connected in $G$.
Such a coloring we call a \emph{pseudo-schedule}.

It is convenient to work with edges $uv$ such that
both $(u,v)$ and $(v,u)$ are nonconflicting (under some coloring);
such an edge is said to be \emph{bidirectional}. A subgraph is bidirectional
iff its edges are bidirectional. A pseudo-schedule with a bidirectional
subgraph $H$ is an \emph{$H$-pseudo-schedule}.

A (symmetric link) network with a routing tree can be represented by a graph $G$
with spanning tree $T$. A $T$-pseudo-schedule $s$ of $G$ is then a very
interesting structure, as it ensures nonconflict along the routing tree
but allows it elsewhere.

For the remainder of this paper, we assume that all graphs are simple;
they may also be taken as connected without loss of generality.
$V(G)$ and $E(G)$ denote the vertex and edge sets of $G$, respectively.
$\Delta_G$ denotes the degree of graph $G$,
$\deg_G(v)$ the degree of a vertex in $G$,
and $N_G[v]$ the closed neighborhood,
that is, the set comprised of $v$ and all vertices adjacent to $v$ in $G$.
Finally, $\dist_G(u,v)$ is the graph distance between vertices $u,v$ in $G$.

\section{Related Work}
\label{sec:related}

Although contention remains the most common approach to solving the
broadcast scheduling problem, a number of explicit scheduling
algorithms and protocols have been proposed.
These can be separated into two categories: node-oriented scheduling,
which is naturally modeled by vertex coloring, and link-oriented scheduling,
modeled by edge coloring. We consider only the former, in particular
because we wish to exploit one-to-many broadcast transmissions.


Node-oriented scheduling has been held back by its approach to conflict
avoidance, which has hitherto taken strict scheduling as its starting point.
This is explicit in DRAND~\cite{drand},
which implements the greedy algorithm for strict scheduling.
Alternately, conflicts are tolerated in
Z-MAC~\cite{zmac} and Funneling-MAC~\cite{funneling_mac}, but only inasmuch
as these protocols combine TDMA with CSMA; the TDMA part of the protocol is
strict.
TSMA~\cite{tsma} and RIMAC~\cite{rimac} also permit conflicts, but this
is aimed primarily at making the schedule easier to compute via decentralized,
probabilistic methods, rather than reducing the division of the medium;
indeed, the division tends to increase.

In his work on RAC-CT~\cite{ren}, Ren exhibits and implements what is
basically the greedy algorithm for pseudo-schedules.
Although Ren gives empirical evidence that the algorithm uses a number of
colors very close to the graph degree on random grid graphs, the general
upper bound is quadratic in the graph degree, giving no asymptotic improvement
over strict scheduling.

\section{Twice-Degree Algorithm}
\label{sec:2delta}

If conditions permit the use of a centralized algorithm,
and we are more or less indifferent to the choice of spanning tree,
the algorithm presented in this section will produce a
pseudo-schedule of any graph $G$ in no more than $2\Delta_G$ colors.
The user chooses a root vertex $r$, most likely corresponding to a base
station/access point; the algorithm selects the spanning (routing) tree to its
convenience, although this tree will minimize distances to $r$.

For any tree $T$ rooted at $r$, we say that $u$ is the
\emph{parent} of $v$ and $v$ is a \emph{child} of $u$ iff $uv \in E(T)$ and
$\dist_T(u,r) < \dist_T(v,r)$; we write $u = \parent_{T,r}(v)$.
(It will be convenient to let $\parent_{T,r}(r)=r$.)
We also henceforth permit ourselves the following abuse of
notation: given a coloring $s$, which may be only partially defined,
for any set $U$ of vertices let $s(U)$ denote the set
$\{ s(u) ; u \in U , s(u) \textrm{ is defined}\}$.

\begin{algorithm}
  \dontprintsemicolon
  \KwIn{$G$, a graph; and $r$, a distinguished vertex of $G$.}
  \KwOut{A pseudo-schedule on $G$.}
  $V_T \gets \{r\} , E_T \gets \emptyset$\;
  $T \gets (V_T,E_T)$\;
  $Q \gets \{r\} ,  Q' \gets \emptyset$ \tcp*[h]{$Q,Q'$ are queues}\;
  $s \gets \emptyset$\;
  \Repeat{$Q = \emptyset$}{
    \ForEach{$v \in Q$ (FIFO)}{
      $N \gets N_G[v] - V_T$\;
      append $N$ to $Q'$ (in any order)\;
      $V_T \gets V_T \cup N$\;
      $E_T \gets E_T \cup \{ vx ; x \in N \}$\;\nllabel{algo:2delta:adoption}
      $K \gets s(N_G[\parent_{T,r}(v)])$\;\nllabel{algo:2delta:parent-colors}
      \ForEach{$x \in N_G[v]$}{
        \If{$x \neq v$ and $vx \notin E_T$}{
          add $s(\parent_{T,r}(x))$ to $K$\;
        }
      }
      $k \gets \min (\mathbb{Z}^+ - K)$\;
      add $v \mapsto k$ to $s$\;
    }
    $Q \gets Q' , Q' \gets \emptyset$\;
  }
  \KwRet{$s$}\;
  \caption{The twice-degree algorithm}
  \label{algo:2delta}
\end{algorithm}



\subsection{Analysis}
\label{sec:2delta:analysis}


It is fairly obvious that the algorithm will use at most $2 \Delta_G$ colors;
the only difficulty is to show that it actually produces a pseudo-schedule.

\begin{theorem}
  Let $s$ be the coloring produced by the twice-degree algorithm
  with input $(G,r)$; then $s$ is a $T$-pseudo-schedule,
  where $T$ is the tree generated internally by the algorithm.
\end{theorem}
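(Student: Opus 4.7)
The plan is to show that every tree edge of $T$ is bidirectional. This suffices, because the tree path between any two connected vertices is then nonconflicting in both traversal directions, which is exactly the pseudo-schedule property. The algorithm's main loop performs a BFS from $r$, so $T$ is a BFS spanning tree and every $v \neq r$ at depth $d$ has $u = \parent_{T,r}(v)$ at depth $d-1$. The structural fact I plan to exploit is a BFS ordering observation: if $y$ at depth $d$ is adjacent to $u$ in $G$ but $\parent_{T,r}(y) \neq u$, then $\parent_{T,r}(y)$ must have been processed before $u$ in round $d-1$ (otherwise $u$ would have claimed $y$), and consequently $y$ is processed before $v$ in round $d$.

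Fix a tree edge $\{u,v\}$ with $u = \parent_{T,r}(v)$. For $(v,u)$ to be nonconflicting I need $l(v) \neq l(y)$ for every $y \in N_G[u] \setminus \{v\}$. If $y$ is colored before $v$ then $l(y) \in s(N_G[u]) \subseteq K_v$ by construction; otherwise $y$ must lie at depth $d$, and either $y$ is a sibling of $v$ in $T$ — in which case $K_y \supseteq s(N_G[u]) \ni l(v)$ — or $\parent_{T,r}(y) \neq u$, which the ordering observation contradicts. For $(u,v)$ to be nonconflicting I need $l(u) \neq l(z)$ for every $z \in N_G[v] \setminus \{u\}$, which I will verify by showing $l(u) \in K_z$ at the moment $z$ is colored. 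If $\parent_{T,r}(z) = v$, then $u \in N_G[v] = N_G[\parent_{T,r}(z)]$ is already colored, so $l(u) \in s(N_G[\parent_{T,r}(z)]) \subseteq K_z$. Otherwise $v$ is neither a tree-parent nor a tree-child of $z$ (the latter because $\parent_{T,r}(v) = u \neq z$), so $vz \notin E_T$; and $v$ must already lie in $V_T$ when $z$ is processed, else $z$ would claim $v$ and force $\parent_{T,r}(v) = z$. Hence $v$ appears in $z$'s non-tree-neighbor loop and contributes $s(\parent_{T,r}(v)) = \{l(u)\}$ to $K_z$.

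The main subtlety will be the non-sibling sub-case of the $(v,u)$ direction: $K_y$ does not list $l(v)$ directly, so preventing a color collision between $v$ and a cross-neighbor $y$ of $u$ genuinely requires the BFS ordering observation above. The remaining cases amount to bookkeeping on what $K$ contains at each coloring step, together with the fact that a vertex adjacent to $v$ which is not a tree-neighbor of $v$ must have been inserted into $V_T$ with a well-defined (and already-colored) parent by the time $v$ or $z$ consults it.
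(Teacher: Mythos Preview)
Your proposal is correct and follows essentially the same route as the paper's proof: both fix a tree edge $uv$ with $u=\parent_{T,r}(v)$, establish bidirectionality by case analysis on the depth/parentage of the neighbor in question, and lean on the same BFS ordering observation (a cross-neighbor of $u$ at depth $d$ with a different parent must have been enqueued, hence colored, before $v$). Your organization of the $(u,v)$ direction into the single dichotomy $\parent_{T,r}(z)=v$ versus $\parent_{T,r}(z)\neq v$ is slightly tidier than the paper's three-way split, but the substance is identical; just note that the case $z=v$ lands in your ``otherwise'' branch yet is not covered by the non-tree-neighbor loop argument---it is handled instead by $l(u)\in s(N_G[\parent_{T,r}(v)])\subseteq K_v$, which you should state explicitly.
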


\begin{proof}
  Observe that $T$ is produced by a breadth-first search process and that
  every vertex at $r$-distance $i$ is colored before any vertex at $r$-distance
  $i+1$. We can also see that if vertices $u,v$ have distinct parents $p_u,p_v$
  respectively, then if $p_u$ was colored before $p_v$, $u$ was colored before
  $v$.

  Take $uv \in E(T)$ such that $u = \parent_{T,r}(v)$.
  First we show that $(u,v)$ is nonconflicting. $s(u) \neq s(v)$ clearly.
  Consider next any child $x$ of $v$; since $u$ is adjacent to
  $v = \parent_{T,r}(x)$, we have $s(u) \neq s(x)$.
  The only vertices left to check are those in $N_G[v] - N_T[v]$;
  let $y$ be such a vertex. Now if $\dist_G(r,u) < \dist_G(r,y)$, then $y$
  was colored after $u$, so $s(u) \neq s(y)$.
  If, on the other hand, $\dist_G(r,u) = \dist_G(r,y)$, then $u$ ``adopted''
  $v$ before $y$ could (line~\ref{algo:2delta:adoption}), which implies that
  $u$ was colored before $y$, hence $s(y) \neq s(x)$.

  Let us now establish that $(v,u)$ is nonconflicting. Obviously
  $s(v) \neq s(x)$ for any $x \in N_G[u]$ with $\dist_G(r,x) < \dist_G(r,v)$
  since in this case $x$ must have been colored before $v$.
  Turning to $x \in N_G[u]$ with $\dist_G(r,x) = \dist_G(r,v)$,
  clearly $s(v) \neq s(x)$ if $x$ is a child of $u$, since $v$ will be checked
  before coloring $x$ and vice-versa.
  If, on the other hand, $\parent_{T,r}(x)=w \neq u$, it must be that
  $w$ was colored before $u$ since the former adopted $x$, thus $s(x)$
  was defined when the algorithm computed $s(N_G[u])$
  (line~\ref{algo:2delta:parent-colors}) before coloring $v$, and
  $s(v) \neq s(x)$. \qed
\end{proof}




\section{d-Band Algorithm}
\label{sec:dband}

The twice-degree algorithm employs a very small number of colors in the worst
case, but the requirements for central control and user indifference to the
resulting spanning tree may not be reasonable in network applications.
The $d$-band algorithm does away with these requirements, albeit at the cost
of raising the worst-case number of colors used, although this remains linear
in the graph degree (with a reasonable coefficient).

Let $G$ be a graph with a spanning tree $T$ rooted at $r$.
Intuitively, the algorithm divides the graph into $d$ bands based on vertices'
$T$-distance from $r$ modulo $d$,
with each band being colored from its own palette, disjoint from every other.
The idea is that, if $d$ is sufficiently large,
we can rule out many conflicts \emph{a priori},
greatly reducing the number of vertices that have to be checked.

The $d$-band algorithm is decentralized, with each vertex
acting as an autonomous agent passing the following messages:
\begin{itemize}
\item REQ-COL($L$), where
  $L$ is a set of excluded colors;
\item PUT-COL($x$,$k$), where
  $x$ is a vertex being assigned color $k$;
\item RPT-COL($k$,$w$), where
  $k$ is the sender's color (if known) and
  $w$ is a vertex that must be colored before the sender is colored;
  or, in a ``reverse report,'' where $k$ is a color excluded for the sender;
\item RPT-PAR($k$,$w$), where
  $k$ is the color of the sender's parent (if known) and
  $w$ is a vertex that must be colored before the sender's parent is colored;
  or, in a ``reverse report,'' where $k$ is the color of $w$,
  a stepparent of the sender;
\item DEP-REQ($w$), where
  $w$ is a vertex whose color must be assigned before the sender can
  issue REQ-COL; and
\item DEP-PUT($w$), where
  $w$ is a vertex whose color must be assigned before the sender can
  issue PUT-COL.
\end{itemize}
The sending vertex is implicitly included in any message, along with
information about the intended receiver.
We let $\infty$ denote an unknown color and let
\begin{equation}
  \mathrm{Palette}(v) = \{ \dist_T(r,v) \bmod d + id ; i \geq 0\}
  \textrm{.}
\end{equation}

Along with the definitions of parent and child as in \S\ref{sec:2delta},
we say also that $u$ is a \emph{stepparent} of $v$ iff
$\dist_T(u,r) = \dist_T(v,r)-1$ and $uv \in E(G)-E(T)$;
and that $x$ is a \emph{stepchild} of $y$ iff
$\dist_T(x,r) = \dist_T(y,r)+1$ and $xy \in E(G)-E(T)$.
Each vertex is assumed to know its parent, children, stepparents, and
stepchildren. Additionally, each vertex knows its $T$-distance from $r$.
Finally, we assume that $V(G)$ admits a strict total order
$\prec$ that can be efficiently computed at any vertex.

The flow of the algorithm about a vertex $v$ can be sketched
roughly as follows:
\begin{enumerate}
\item $v$ listens for RPT-PARs from all of its stepchildren,
  building a list of excluded colors $L$.
\item $v$ sends REQ-COL($L$) to its parent $u$.
\item $u$ listens for RPT-COLs from all of its stepchildren,
  building a list of forbidden colors $K$.
\item $u$ sends PUT-COL($v$,$k_v$) to $v$ (and all stepchildren of $u$),
  where $k_v$ is the smallest color in the palette of $v$ not in $K \cup L$.
\item $v$ broadcasts RPT-COL($k_v$,$v$).
\item Each child of $v$ sends RPT-PAR($k_v$,$v$) to all its stepparents.
\end{enumerate}
(In this sketch, for the sake of simplicity we have ignored
the DEP facility.)
In general, the $d$-band algorithm colors the leaves of $T$ first and
proceeds towards the root, although significant parallelism is possible.


The root vertex $r$ assigns itself the color 0, making this known by sending
RPT-COL(0,$r$) to its children.
Any vertex besides $r$ must acquire its color as per AcquireColor
(Algorithm~\ref{algo:dband-acquire}).
Any vertex with children must assign colors to its children as per
AssignColors
(Algorithm~\ref{algo:dband-assign}).
Finally, any non-root vertex must receive and relay reports as per
ReportColors
(Algorithm~\ref{algo:dband-report}).
The ensemble of these procedures, running independently and in parallel on
every vertex simultaneously, constitutes the $d$-band algorithm.

(We assume fully reliable transmission with synchronous communication
primitives \emph{send} and \emph{listen}. AssignColors uses the primitive
\emph{ack-send $\mathcal{M}$ to $x$} by which is meant: send $\mathcal{M}$
to $x$ and wait until $\mathcal{M}$ is sent back as confirmation,
queuing any messages that arrive in the meanwhile for retrieval by the
next call to listen.)

\begin{algorithm}
  \dontprintsemicolon
  \KwIn{$v$, ``this'' vertex.}
  \KwOut{A color.}
  $L \gets \emptyset$\;
  $f_W \gets \{ x \mapsto v ; \textrm{$x$ is a stepchild of $v$}\}$\;
  $W \gets \mathrm{Image}(f_W)$\;
  $w_{\prec} \gets v$\;
  send RPT-COL($\infty$,$v$) to children of $v$\;
  \While{$W \neq \emptyset$}{\nllabel{algo:dband-acquire-phase1}
    listen for message $\mathcal{M}$\;
    \uIf{$\mathcal{M} = \textrm{RPT-PAR($k$,$w$)}$ from (step)child $x$ of $v$}{
      \uIf{$k \neq \infty$}{
        remove $x \mapsto f_W(x)$ from $f_W$\;
        add $k$ to $L$\;
      }
      \lElseIf{$w=v$}{remove $x \mapsto f_W(x)$ from $f_W$}\;\nllabel{algo:dband-acquire-cycle-break}
      \lElse{add/replace $x \mapsto w$ in $f_W$}\;
      $W \gets \mathrm{Image}(f_W)$\;
    }
    \ElseIf{$\mathcal{M} = \textrm{DEP-REQ($w$)}$ from a child of $v$}{
      \tcp{Reverse dependence}
      add/replace $w \mapsto w$ in $f_W$\;
    }

    \If{$w_{\prec} \neq  \min_{\prec} (W \cup \{v\})$}{
      $w_{\prec} \gets \min_{\prec} (W \cup \{v\})$\;
      send DEP-REQ($w_{\prec}$) to children of $v$\;
      send RPT-COL($\infty$,$w_{\prec}$) to stepparents of $v$\;\nllabel{algo:dband-acquire-ItoII}
    }
  }
  send RPT-COL($\infty$,$v$) to children of $v$\;
  send REQ-COL($L$) to the parent of $v$\;
  listen for PUT-COL($v$,$k_v$) from the parent of $v$\;
  send RPT-COL($k_v$,$v$) to children, stepchildren, and stepparents of $v$\;
  \KwRet{$k_v$}\;
  \caption{The d-band algorithm: AcquireColor}
  \label{algo:dband-acquire}
\end{algorithm}

\begin{algorithm}
  \dontprintsemicolon
  \KwIn{$v$, ``this'' vertex.}
  $K,f_L,f_W,W \gets \emptyset$\;
  $X \gets \textrm{stepchildren of $v$} , Z \gets \textrm{children of $v$}$\;
  $f_R \gets \{ z \mapsto \emptyset ; z \in Z \}$\;
  $w_{\prec} \gets v$\;
  \While{$Z \neq \emptyset$}{
    \uIf{$X = \emptyset$}{
      \ForEach{$z \mapsto L \in f_L$ such that $f_R(z) = \emptyset$}{
        $k_z \gets \min (\mathrm{Palette}(z) - K - L)$\;
        send PUT-COL($z$,$k_z$) to $z$ and stepchildren of $v$\;
        remove $z$ from $Z$\;
        add $k_z$ to $K$\;
      }
    }
    \ElseIf{$W \neq \emptyset$ and $w_{\prec} \neq \min_{\prec} W$}{
      $w_{\prec} \gets \min_{\prec} W$\;
      send DEP-PUT($w_{\prec}$) to children of $v$\;
    }
    listen for message $\mathcal{M}$\;
    \uIf{$\mathcal{M} = \textrm{REQ-COL($L$)}$ from child $z$ of $v$}{
      \lIf{$f_L(z)$ defined}{$L \gets L \cup f_L(z)$}\;
      add/replace $z \mapsto L$ in $f_L$\;
    }
    \uElseIf{$\mathcal{M} = \textrm{RPT-COL($k$,$w$)}$ from (step)child $x$}{
      \uIf{$x \in X$}{
        \uIf{$k \neq \infty$ or $w$ is a child of $v$}{
          remove $x$ from $X$\;
          remove $x \mapsto f_W(x)$ from $f_W$\;
          add $k$ to $K$\;
          \lIf{$w$ is a child of $v$}{ack-send DEP-PUT($v$) to $x$}\;
        }
        \lElse{add/replace $x \mapsto w$ to $f_W$}\;
        $W \gets \mathrm{Image}(f_W)$\;
      }
      \ElseIf{$x$ is a child of $v$}{
        \uIf{$k \neq \infty$}{
          $L \gets (\textrm{$f_L(x)$ defined ? } f_L(x) : \emptyset)$\;
          add $k$ to $L$\;
          add/replace $x \mapsto L$ in $f_L$\;
        }
        \lElse{remove $w$ from $f_R(z)$}\;
      }
    }
    \ElseIf{$\mathcal{M} = \textrm{DEP-PUT($w$)}$ from child $z$ of $v$}{
      \tcp{Reverse dependence}
      add $w$ to $f_R(z)$\;
    }
  }
  send PUT-COL($\infty$,$v$) to stepchildren of $v$\;
  \caption{The d-band algorithm: AssignColors}
  \label{algo:dband-assign}
\end{algorithm}

\begin{algorithm}
  \dontprintsemicolon
  \KwIn{$v$, ``this'' vertex.}
  $p \gets$ parent of $v$\;
  $W_{I},W_{II} \gets \emptyset$\;
  listen for message $\mathcal{M}$ from (step)parents\;
  \uIf{$\mathcal{M} = \textrm{DEP-REQ($w$)}$ from $p$}{
    \If{$w$ is a stepparent of $v$}{
      \tcp{Type I cycle breaking: reverse dependence}
      add $w$ to $W_I$\;
      send DEP-REQ($w$) back to $p$\;
    }
    send RPT-PAR($\infty$,$w$) to stepparents of $v$\;
  }
  \uElseIf{$\mathcal{M} = \textrm{RPT-COL($k$,$w$)}$ from $x$}{
    \uIf{$x=p$}{
      send RPT-PAR($k$,$w$) to stepparents of $v$\;\nllabel{algo:dband-report-IItoI-2}
    }
    \ElseIf{$x \in W_I$ and $k \neq \infty$}{
      \tcp{Type I cycle breaking: reverse report}
      remove $x$ from $W_I$\;
      send RPT-PAR($k$,$x$) to $p$\;
    }
  }
  \uElseIf{$\mathcal{M} = \textrm{DEP-PUT($w$)}$ from $x$}{
    \uIf{$x=p$}{
      send RPT-COL($\infty$,$w$) to children and stepparents of $v$\;\nllabel{algo:dband-report-IItoI-1}
    }
    \Else{
      \tcp{Type II cycle breaking: reverse dependence}
      add $x$ to $W_{II}$\;
      send DEP-PUT($x$) to $p$ and $x$\;
    }
  }
  \ElseIf{$\mathcal{M} = \textrm{PUT-COL($u$,$k$)}$ from $x \in W_{II}$}{
    \tcp{Type II cycle breaking: reverse report}
    \lIf{$k = \infty$}{remove $x$ from $W_{II}$}\;
    send RPT-COL($k$,$x$) to $p$\;
  }
  \caption{The d-band algorithm: ReportColors}
  \label{algo:dband-report}
\end{algorithm}

\subsection{Analysis}
\label{sec:dband:analysis}

We say that the $d$-band algorithm \emph{terminates} on a particular
graph with rooted spanning tree iff AcquireColor returns on every vertex.

\begin{theorem}
  \label{thm:dband-term}
  The $d$-band algorithm terminates on any graph with any choice of
  root and spanning tree.
\end{theorem}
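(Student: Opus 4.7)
The plan is to argue by contradiction. Suppose some vertex fails to terminate; since AcquireColor's only unbounded wait is the \textbf{while} loop on $W$ (followed by a single listen for PUT-COL from the parent), non-termination means either $W$ is never emptied at some vertex or a PUT-COL is never issued from some parent's AssignColors. Because AssignColors completes as soon as every child has sent REQ-COL and every stepchild has sent a non-$\infty$ RPT-COL (or a cycle-breaking variant), an AssignColors failure can in turn be traced to an AcquireColor failure. So it suffices to show every AcquireColor returns.

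First I would introduce a \emph{wait-for digraph} $D$ on $V(G)$ whose arcs encode the current contents of the $f_W$ maps (both in AcquireColor and AssignColors) plus the implicit ``$v$ waits on its parent for PUT-COL'' arcs. Without $D$ ever containing a directed cycle, an easy induction along the reverse topological order of $D$ shows the algorithm drains: a sink vertex has $W=\emptyset$, immediately sends its REQ-COL, and its parent's AssignColors eventually issues the corresponding PUT-COL, after which the sink reports to its neighbors and disappears from everyone else's $D$-predecessors. Hence non-termination requires a persistent directed cycle in $D$.

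Next I would classify persistent cycles into the two families the code explicitly handles: Type I cycles live inside the AcquireColor phase (chains of stepchild$\to$stepparent waits closing back through a parent link), and Type II cycles live inside AssignColors (chains closing through parent$\to$child order of assignment). For each type I would show that the $\prec$-minimum vertex $w^\star$ appearing in any wait-set on the cycle must eventually be propagated all the way around: whenever $w_\prec$ changes at some vertex, line~\ref{algo:dband-acquire-ItoII} of AcquireColor (resp.\ the DEP-PUT branch of AssignColors) emits a DEP-REQ or DEP-PUT carrying $w^\star$ in the direction that traverses the cycle. ReportColors' Type I / Type II branches record the reverse-dependence and, as soon as $w^\star$ itself acquires (or is assigned) its color, fire a reverse RPT-PAR or RPT-COL back along the cycle. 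The receiving AcquireColor then removes the offending stepchild via the ``$w=v$'' clause at line~\ref{algo:dband-acquire-cycle-break} (or AssignColors removes it via the symmetric clause), which shrinks $W$ and contradicts the persistence assumption.

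The hard part will be the bookkeeping: proving that every DEP-REQ eventually produces a matching reverse RPT-PAR and every DEP-PUT a matching reverse RPT-COL, with no message being absorbed or lost along an interleaving of the three concurrent procedures and six message types. I would discharge this with a local invariant at each vertex stating that the multiset $W_I$ (resp.\ $W_{II}$) in ReportColors exactly tracks the outstanding reverse obligations owed to the parent, together with a monotonicity argument that $w_\prec$ at each vertex is nondecreasing in $\prec$ once it starts advancing. Combined with the finiteness of $V(G)$ and the strict totality of $\prec$, this forces the cycle-breaking cascade to complete in finitely many message steps, yielding the desired contradiction and establishing termination.
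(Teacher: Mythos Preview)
Your wait-for digraph and contradiction framing are essentially the paper's own argument in more formal dress: the paper also isolates request dependencies (your Type~I) and put dependencies (your Type~II), and breaks each cycle at its $\prec$-minimum vertex by propagating DEP-REQ/DEP-PUT around until that vertex recognizes itself at line~\ref{algo:dband-acquire-cycle-break} and drops the offending wait.

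There is, however, a genuine gap in your classification step. You assert that persistent cycles fall into ``the two families the code explicitly handles,'' Type~I living entirely in AcquireColor and Type~II entirely in AssignColors. This dichotomy is not exhaustive: a single dependency cycle in your digraph $D$ can alternate between request and put dependencies, since you (correctly) included arcs from both $f_W$ maps and the wait-for-PUT-COL arcs. The paper treats such \emph{mixed-type} cycles explicitly, showing that AcquireColor repackages RPT-PAR as RPT-COL (line~\ref{algo:dband-acquire-ItoII}) and ReportColors does the reverse (lines~\ref{algo:dband-report-IItoI-1} and~\ref{algo:dband-report-IItoI-2}), so that the break propagates through the phase transitions and the cycle is handled by whichever mechanism applies at its $\prec$-minimum vertex. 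Your argument as written does not cover this case.

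A second, smaller gap is the treatment of overlapping cycles. The paper disposes of a family $\mathcal{C}$ of intersecting cycles by taking the global $\prec$-minimum over $\bigcup\mathcal{C}$, observing that every cycle through that vertex breaks there, and inducting on $|\mathcal{C}|$. Your appeal to ``finiteness of $V(G)$ and the strict totality of $\prec$'' does not supply this decreasing measure, and your monotonicity claim for $w_\prec$ is suspect: reverse dependencies injected via DEP-REQ can enlarge $W$ and hence lower $\min_\prec(W\cup\{v\})$, so $w_\prec$ is not obviously nondecreasing.
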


\begin{proof}
  Let $G$ be a graph with spanning tree $T$ rooted at $r$.
  AcquireColor (Algorithm~\ref{algo:dband-acquire})
  on vertex $v$ does its main work in the loop beginning at
  line~\ref{algo:dband-acquire-phase1}.
  As this loop is bypassed when $v$ has no stepchildren, assume that it does.
  We say that $v$ has a \emph{request dependence} on $u$ when $u$ is the parent
  of a stepchild of $v$; and just as $v$ depends on $u$, $u$ may depend on $t$,
  and so on. If we can follow the dependency chain to some terminal $a$ that has
  no stepchildren, there is no problem,
  since we can inductively work back to $v$.
  However, the dependency chain may in fact be a cycle, in the sense that
  $v$ has request dependence on $u$, $u$ has request dependence on $t$, and so
  on up to $a$, but then $a$ has request dependence on $v$.
  This we call a \emph{dependency cycle of type I}.
  
  Given $C$, a dependency cycle of type I, let $v = \min_{\prec} C$.
  Assume, for the time being, that $C$ is the only dependency cycle in the
  graph.
  $v$ issues DEP-REQ($v$) to its children, and via
  ReportColors (Algorithm~\ref{algo:dband-report})
  one of the
  children sends RPT-PAR($\infty$,$v$) to $x$, which depends on $v$.
  But since $v \prec x$, $x$ issues DEP-REQ($v$) to its children,
  one of which then sends RPT-PAR($\infty$,$v$) to $y$, which depends on $x$,
  and so on.

  Let $u$ be the vertex in $C$ on which $v$ depends, creating the cycle.
  $u$ issues DEP-REQ($v$) to its children, and one of them transmits
  RPT-PAR($\infty$,$v$) to $v$. At this point $v$ can detect the dependency
  cycle, and $v$ breaks the cycle by ignoring its dependence on $u$
  (see line~\ref{algo:dband-acquire-cycle-break} of AcquireColor).
  As per our assumptions, $v$ is now free of request dependencies,
  or at worst sits in linear dependence chains that are naturally resolved;
  that is, $v$ (eventually) acts as if it has no stepchildren,
  and proceeds to issue REQ-COL to its parent $p$.
  
  Let us assume that $p$ eventually assigns a color to $v$ via PUT-COL.
  $v$ then broadcasts RPT-COL, resolving the now-linear dependency chain.
  (The resolution is a little unusual at $u$, where we have registered a
  ``reverse dependence'' on $v$---but this will be cleared by the RPT-COL
  broadcast from $v$, which causes a ``reverse report'' RPT-PAR to be sent
  to $u$ from one of its children.) Hence every vertex in $C$ gradually becomes
  free to issue REQ-COL, and if we assume that every one of their parents
  replies with PUT-COL, then AcquireColor terminates on every vertex in $C$.

  We now shift our attention to AssignColors
  (Algorithm~\ref{algo:dband-assign}).
  A vertex $v$ with parent $p_v$
  is said to have a \emph{put dependence} on any stepchild of $p_v$.
  (It is convenient for the dependence to be registered at $p_v$.)
  Just like request dependencies, put dependencies can be chained and may
  form a cycle; this we call a \emph{dependency cycle of type II}.
  It is not hard to see that a type II cycle is broken by essentially the
  same method used for the type I cycle,
  with DEP-PUT and RPT-COL standing in for DEP-REQ and RPT-PAR, respectively.
  (Once again, there is a special
  ``reverse dependence'' facility. Let $p_v$ have stepchild $u$ with parent
  $p_u$. If a type II cycle is broken at $p_v$, then $p_u$ will register
  the reverse dependence of $u$ on the children of $p_v$.
  Resolution comes when $p_u$ finishes coloring its children,
  with ``reverse report'' RPT-COLs being sent to $p_u$ via $u$.)
  Hence the $d$-band algorithm terminates in the presence of a
  dependency cycle of type II, provided that REQ-COL is issued.
  
  Finally, a dependency cycle of \emph{mixed type} is possible.
  AcquireColor handles the transition from request to put dependency by
  repackaging RPT-PAR as RPT-COL (line~\ref{algo:dband-acquire-ItoII}),
  while ReportColors handles the reverse transition by first repackaging
  DEP-PUT as RPT-COL (line~\ref{algo:dband-report-IItoI-1}) and then
  relaying the latter as RPT-PAR (line~\ref{algo:dband-report-IItoI-2}).
  But observe that a cycle of mixed type can be broken by the methods previously
  described; it is treated exactly as if it were a cycle of type I or type II if
  it is broken by AcquireColor or AssignColors, respectively. The same
  mechanisms then assure that the resolution proceeds across the cycle.
  We conclude that the $d$-band algorithm terminates if there is no more than
  one dependency cycle in the graph (of whatever type).


  Given a dependency cycle $C$ (of any type), observe that there exists $l$
  such that $\dist_T(v,r)=l$ for all $v \in C$; call $l$ the level of $C$.
  Clearly cycles with different levels cannot affect each other; additionally,
  disjoint cycles do not interact.
  Thus the $d$-band algorithm terminates given any number of disjoint dependency
  cycles per level of $T$.

  Unfortunately a graph may contain many overlapping dependency cycles;
  we claim the algorithm terminates regardless. Let $\mathcal{C}$ be
  a family of intersecting dependency cycles. As there is a strict total order
  $\prec$ on vertices, there exists some
  \begin{displaymath}
    v = \min_{\prec} \bigcup_{C \in \mathcal{C}} C
    \textrm{.}
  \end{displaymath}
  Observe that AcquireColor must terminate on $v$, since all dependency
  cycles in $\mathcal{C}$ containing $v$ will be broken at $v$, if not
  elsewhere. After breaking all such cycles and resolving all newly-linear
  chains, let $\mathcal{C}'$ be the remaining cycles.
  Obviously $|\mathcal{C}'| < |\mathcal{C}|$, and we can apply the same argument
  to $\mathcal{C}'$ inductively.
  We conclude that the $d$-band algorithm terminates. \qed
\end{proof}


Because the $d$-band algorithm terminates, we can treat the color output by
AcquireColor on every vertex as a coloring of the whole graph.
However, $d$ must be taken sufficiently large for coloring to be a
pseudo-schedule, as made precise by the following theorem, the proof of
which appears in \S3.4.1 of \cite{joseph-phd}.
Note that an immediate consequence of this is that $d=3$ suffices if $T$ is a
shortest path tree.

\begin{theorem}
  \label{thm:dband-dval}
  Let $G$ be a graph with spanning tree $T$ rooted at $r$.
  The $d$-band algorithm yields a $T$-pseudo-schedule provided that
  \begin{equation}
    d \geq \max_{uv \in E(G)} |\dist_T(u,r)-\dist_T(v,r)| + 2
  \end{equation}
  or $d$ is greater than the height of $T$.
\end{theorem}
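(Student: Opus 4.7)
The plan is to combine a palette-separation argument---which reduces the bidirectional-nonconflict condition on every $T$-edge to a handful of same-level subcases---with a trace of the algorithm's message passing that dispatches each residual case. The main obstacle is showing that when dependency cycles force same-level vertices to be colored out of their natural order, the reverse-dependence and reverse-report machinery still routes the appropriate colors into the appropriate exclusion sets before those colors are committed.

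Two vertices $u, w$ share a palette iff $\dist_T(u,r) \equiv \dist_T(w,r) \pmod d$. Fix a $T$-edge $ab$ with $a = \parent_{T,r}(b)$, and set $D = \max_{uv \in E(G)} |\dist_T(u,r) - \dist_T(v,r)|$. For any $x \in N_G(a) \cup N_G(b)$, a short estimate gives $|\dist_T(x,r) - \dist_T(a,r)| \leq D + 1$ and the analogous bound relative to $b$; each hypothesis in the theorem yields $d > D + 1$ (trivially when $d$ exceeds the height of $T$), so $x$ can share a palette with $a$ or $b$ only when it shares their level. This reduces the theorem to three same-level claims: (a) $s(a) \neq s(b)$, immediate from $d \geq 2$; (b) $s(b) \neq s(x)$ for each sibling of $b$ and each stepchild of $a$; and (c) $s(a) \neq s(x)$ for each stepparent $x$ of $b$. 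Claim (b) reads directly off AssignColors on $a$: its loop waits on every stepchild of $a$, whose RPT-COL populates $K$ before $k_b = \min(\mathrm{Palette}(b) - K - L)$ is chosen, and each sibling of $b$ is added to $K$ at the instant it is assigned.

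Claim (c) is the heart of the proof. In the cycle-free flow, $a$ is colored first; $b$'s ReportColors repackages $a$'s RPT-COL as RPT-PAR$(s(a), a)$ to the stepparents of $b$; and $x$ accumulates $s(a)$ into $L_x$ before $\parent_{T,r}(x)$ chooses $k_x$. If a dependency cycle contains both $a$ and $x$ with $x \prec a$, the machinery of Theorem~\ref{thm:dband-term} colors $x$ first; here I would trace that the propagation of DEP-REQ$(x)$ around the cycle triggers the Type I cycle-break branch at $b$ (since $x$ is a stepparent of $b$), adding $x$ to $b$'s $W_I$, so that when $x$'s RPT-COL broadcast later reaches $b$, the Type I reverse-report RPT-PAR$(s(x), x)$ is sent back to $a$, depositing $s(x)$ in $L_a$ before $a$ commits. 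Mixed-type cycles reduce to this analysis after tracking the repackagings at lines~\ref{algo:dband-acquire-ItoII}, \ref{algo:dband-report-IItoI-1}, and \ref{algo:dband-report-IItoI-2}. An induction over overlapping same-level cycles in $\prec$-order, mirroring the closing argument of the termination proof, then assembles the per-edge guarantees into the $T$-pseudo-schedule property.
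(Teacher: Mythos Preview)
The paper does not prove Theorem~\ref{thm:dband-dval} in-text; it defers the argument to \S3.4.1 of~\cite{joseph-phd}. There is therefore nothing here to compare your proposal against directly, and the structure you choose---palette separation to reduce to same-level neighbours, followed by a message trace for each residual case---is the natural shape such a proof must take.

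Two points need repair. First, the parenthetical ``trivially when $d$ exceeds the height of $T$'' is false as stated: $d>\mathrm{height}(T)$ does not force $d>D+1$ (take an edge joining $r$ to a deepest leaf, so $D=\mathrm{height}(T)$ and $d=\mathrm{height}(T)+1=D+1$). The conclusion you want still holds under the second hypothesis, but via the direct bound $|\dist_T(x,r)-\dist_T(a,r)|\leq\mathrm{height}(T)<d$, not through $D+1$.

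Second, and more substantively, your dispatch of claim~(b) asserts that AssignColors on $a$ ``waits on every stepchild of $a$.'' That is precisely what fails when a Type~II (or mixed) dependency cycle is broken at $a$: the stepchild $x$ is removed from $X$ with $k=\infty$, and $a$ may commit $k_b$ before $s(x)$ exists. The guarantee $s(b)\neq s(x)$ is then enforced not through $K$ at $a$ but through the reverse machinery you invoke only for claim~(c): $a$'s ack-send of DEP-PUT$(a)$ places $a$ into $W_{II}$ at $x$, so each subsequent PUT-COL from $a$ triggers a Type~II reverse-report RPT-COL from $x$ to $\parent_{T,r}(x)$, depositing $s(b)$ into $f_L(x)$ and holding $f_R(x)\neq\emptyset$ until $a$ has finished assigning. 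The cycle-breaking analysis has to be run symmetrically for (b) and (c); as written, (b) is claimed immediate when it is not.
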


We will consider the number of colors used by a $d$-band pseudo-schedule $s$
to be the greatest integer color appearing in $s$ plus one; this forces us
to account for ``gaps'' of unused colors.

\begin{theorem}
  Let $h$ denote the number of colors used by the $d$-band algorithm
  on graph $G$ with spanning tree $T$ rooted at $r$,
  where $d$ meets the conditions of Theorem~\ref{thm:dband-dval}.
  If $d \leq \mathrm{height}(T)+1$ and $\Delta_G \geq 2$, then
  \begin{equation}
    d(\Delta_T - 1) + 1 \leq h \leq 2d(\Delta_G-1)
  \end{equation}
  except possibly when $\Delta_T = 2$, in which case the lower bound
  falls to $d$.
\end{theorem}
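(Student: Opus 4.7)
The plan is to treat the two bounds independently. Both arguments rely on the same arithmetic fact: since $\mathrm{Palette}(v) = \{a + id : i \geq 0\}$ with $a = \dist_T(r,v) \bmod d < d$, if $M$ of its elements are forbidden to $v$ then the least available one is at most $a + Md$.

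For the upper bound, fix any vertex $v$ at level $l = \dist_T(r,v)$ with parent $p$, and consider the instant $p$ executes $k_v \gets \min(\mathrm{Palette}(v) - K - L)$ inside AssignColors. Tracing the message flow, $K$ is the union of the colors of $v$'s already-assigned siblings and of $p$'s stepchildren, while $L$ (delivered by $v$ in REQ-COL) consists of the colors of the $T$-parents of $v$'s stepchildren. All such vertices lie at level $l$, so $K \cup L \subseteq \mathrm{Palette}(v)$ and its elements really do restrict the minimum. The degree bounds give $|K| \leq \deg_G(p) - 1 \leq \Delta_G - 1$ and $|L| \leq \deg_G(v) - \deg_T(v) \leq \Delta_G - 1$, so $|K \cup L| \leq 2(\Delta_G - 1)$. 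Combined with $a \leq d - 1$ this produces $k_v \leq 2d(\Delta_G - 1) - 1$ after a short case analysis that either exploits overlap between $K$ and $L$, or notes that one of the two degree bounds is loose by one; taking the maximum over $v$ then delivers $h \leq 2d(\Delta_G - 1)$.

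For the lower bound, take a vertex $v^\star$ of tree degree $\Delta_T$. The nonconflicting condition applied to every tree edge incident to $v^\star$ forces its tree-neighbors to receive pairwise distinct colors. If $v^\star$ is the root of $T$, its $\Delta_T$ children all sit at level $1$ and share palette $\{1, 1+d, 1+2d, \ldots\}$, so the largest color used among them is at least $1 + (\Delta_T - 1)d$, and $h \geq (\Delta_T - 1)d + 2$. If $v^\star$ is not the root, its $\Delta_T - 1$ children supply $\Delta_T - 1$ distinct colors out of a common palette; to obtain the full bound I would track the color chain along the unique $r$-to-$v^\star$ path in $T$, using the fact that the root is colored $0$ and that the greedy assignment makes the palette residues along the path sweep through $\{0, 1, \ldots, d-1\}$ (this is where the hypothesis $d \leq \mathrm{height}(T)+1$ enters), so that the sibling contribution at $v^\star$ stacks on top of a large offset. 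The $\Delta_T = 2$ exception is exactly the case when $T$ is a path: no vertex has two or more siblings, no palette element past the first is forced by a sibling constraint, and only the residue sweep along the path survives, giving $h \geq d$.

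The hard part will be closing the non-root case of the lower bound. A purely local count at $v^\star$ yields only $((l+1) \bmod d) + (\Delta_T - 2)d + 1$, which falls short by almost $d$ when that offset is small; bridging the gap requires combining the sibling constraint at $v^\star$ with a global accounting of which residues must already appear between $r$ and $v^\star$ in the algorithm's output, and I expect a careful induction down the root-to-$v^\star$ path to produce the needed extra $d$. The upper bound, by contrast, is mostly bookkeeping once one has parsed $K$ and $L$; the only delicacy is the final savings of $d$ that reaches $2d(\Delta_G - 1)$ rather than the naive $(2\Delta_G - 1)d$.
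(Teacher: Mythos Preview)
Your upper-bound argument is the paper's argument. The point you leave unresolved---where the ``final savings of $d$'' comes from---is exactly your second guess: one of the degree bounds is loose by one. The paper fixes $v$ at level $d-1$ (so the palette offset is $a=d-1$, the worst case) and observes that $p_v$ has at most $\Delta_G-2$ neighbours at that level other than $v$, because one of $p_v$'s $\le \Delta_G$ neighbours is $p_v$'s own parent, which sits at level $d-2$. That gives $|K|\le \Delta_G-2$ directly, so together with $|L|\le \Delta_G-1$ (your stepchild count) there are at most $2\Delta_G-3$ forbidden palette elements, and $k_v \le (d-1)+(2\Delta_G-3)d = 2d(\Delta_G-1)-1$. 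No overlap argument or case analysis is needed.

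Your lower-bound argument is far more elaborate than the paper's, which does \emph{not} split into root/non-root cases and performs no induction along the $r$--to--$v^\star$ path. The paper simply takes $v$ of tree-degree $\Delta_T$, notes that its children share a palette $\{a,a+d,a+2d,\ldots\}$ and receive pairwise-distinct colours, asserts that the colours $a,a+d,\ldots,a+d(\Delta_T-1)$ are therefore used, and sets $a=0$ to read off $h\ge d(\Delta_T-1)+1$; the $\Delta_T=2$ case is ``established by any path graph.'' Your worry that a non-root $v$ has only $\Delta_T-1$ children---so the purely local sibling count seems to fall short by $d$---is legitimate, and the paper's two sentences do not visibly address it. So your proposed path-induction is not what the paper does; it is additional rigour beyond what the paper supplies. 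If you want to match the paper, just state the sibling count; if you want a watertight proof of the non-root case, you will indeed need something like the argument you sketch, and that remains to be written out.
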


\begin{proof}
  The lower bound for $\Delta_T = 2$ is established by any path graph.
  For $\Delta_T \geq 3$,
  let $v$ be the vertex on which $T$ achieves its maximum degree.
  Then any child of $v$ uses a palette containing at least the colors
  $a , a + d , \hdots , a + d(\Delta_T-1)$. With $a=0$, we obtain the lower
  bound.

  For the upper bound, consider a vertex $v$ with $\dist_T(v,r)=d-1$.
  Its parent $p_v$ has at most $\Delta_G-2$ neighbors
  distinct from $v$ but at the same $T$-level as $v$.
  Additionally, $v$ has at most $\Delta_G-1$ stepchildren,
  each of which could have a distinct parent.
  This is a total of $2\Delta_G-2$ vertices (including $v$), so
  \begin{displaymath}
    \mathrm{Palette}(v) \subseteq
    \{ d-1 , 2d-1 , \hdots , d-1 + d(2\Delta_G-3) \}
  \end{displaymath}
  which yields the upper bound. \qed
\end{proof}

\section{Conclusion}
\label{sec:conc}

This work has introduced and motivated pseudo-scheduling as a new approach to
the broadcast scheduling problem. The algorithms exhibited here prove
that pseudo-scheduling can result in asymptotic improvements in medium
division relative to strict scheduling.
This would correspond concretely, for instance, to TDMA frames that
grow only linearly with the neighborhood size, improving the network
throughput, especially in ad-hoc wireless networks in which the neighborhood
size cannot be tightly bounded in advance.
Although the concepts have yet to be
put to the test in practical network applications, a strong theoretical
foundation now exists for implementers.

\bibliographystyle{splncs03}
\bibliography{references}

\end{document}